\documentclass[]{paper}

\usepackage{afterpage,amsfonts,amsmath,amssymb,array}
\usepackage{epsfig,hhline,longtable}
\usepackage{tabularx}
\usepackage{algorithm,algorithmic}
\usepackage{amsthm}

\newtheorem{lemma}{Lemma}[section]

\newtheorem{proposition}{Proposition}[section]

\begin{document}

\markboth{Jan Baldeaux and Alexander Badran}{Applied Mathematical Finance}


\title{Consistent Modelling of VIX and Equity Derivatives Using a $3/2$ plus Jumps Model}

\author{Jan Baldeaux and Alexander Badran}



\maketitle

\begin{abstract}
The paper demonstrates that a pure-diffusion $3/2$ model is able to capture the observed upward-sloping implied volatility skew in VIX options.  This observation contradicts a common perception in the literature that jumps are required for the consistent modelling of equity and VIX derivatives. The pure-diffusion model, however, struggles to reproduce the smile in the implied volatilities of short-term index options. One remedy to this problem is to augment the model by introducing jumps in the index. The resulting $3/2$ plus jumps model turns out to be as tractable as its pure-diffusion counterpart when it comes to pricing equity, realized variance and VIX derivatives, but accurately captures the smile in implied volatilities of short-term index options.
\begin{keywords} Stochastic volatility plus jumps model, $3/2$ model, VIX derivatives
\end{keywords}

\end{abstract}

\section{Introduction}

The Chicago Board Options Exchange Volatility Index (VIX) provides investors with a mechanism to gain direct exposure to the volatility of the S\&P500 index without the need for purchasing index options.  Consequently, the trading of VIX derivatives has become popular amongst investors.  In 2004 futures on the VIX began trading and were subsequently followed by options on the VIX in 2006.  Furthermore, since the inception of the VIX, volatility indices have been created to provide the same service on other indices, in particular, the VDAX and the VSTOXX, which are based on the DAX and the Euro STOXX 50 indices respectively.  Since derivative products are traded on both the underlying index and the volatility index, it is desirable to employ a model that can simultaneously reproduce the observed characteristics of products on both indices. Models that are capable of capturing these joint characteristics are known as consistent models. 

A growing body of literature has been devoted to the joint modelling of equity and VIX derivatives.    The literature can generally be classed in terms of two approaches.   In the first approach, once the instantaneous dynamics of the underlying index are specified under a chosen pricing measure, the discounted price of a derivative can be expressed as a local martingale.  This is the approach adopted in \cite{LianZhu11}, \cite{Sepp08b}, \cite{ZhangZhu06} and \cite{ZhuLian11}.  \cite{ZhangZhu06} derived an analytic formula for VIX futures under the assumption that the S\&P500 is modelled by a Heston diffusion process \cite{Heston93}. A more general result was obtained in \cite{LianZhu11}.  Through a characteristic function approach these authors provided exact solutions (dependent upon a Fourier inversion) for the price of VIX derivatives when the S\&P500 is modelled by a Heston diffusion process with simultaneous jumps in the underlying index and the volatility process. A square-root stochastic variance model with variance jumps and time-dependent parameters was considered for the evolution of the S\&P500 index in \cite{Sepp08b}. The author provided formulae for the pricing and hedging of a variety of volatility derivatives.  Alternatively there is the ``market-model" approach, where variance swaps are modelled directly, as is done in \cite{Bergomi05} and \cite{ContKokholm}. The latter authors proposed a flexible market model that is capable of efficiently pricing realized-variance derivatives, index options, and VIX derivatives.  Realized-variance derivatives were priced using Fourier transforms, index derivatives were priced using a mixing formula, which averages Black-Scholes model prices, and VIX derivatives were priced, subject to an approximation, using Fourier-transform methods.

Models considered under the first approach generally yield (quasi-)closed-form solutions for derivative prices, which by definition are tractable. The challenge lies in ensuring that empirically observed facts from the market data, i.e. characteristic features of the joint dynamics of equity and VIX derivatives, are captured.  On the other hand, the market-model approach ensures by construction that models accurately reflect observed empirical characteristics.  The challenge remaining is to obtain an acceptable level of tractability when pricing derivative products. In this paper we follow the first approach and consider the joint modelling of equity and VIX derivatives when the underlying index follows a 3/2 process \cite{CarrSu07, Heston97, ItkinCa10, Lewis00} with jumps in the index only (henceforth called the 3/2 plus jumps model).   The model presented here is more parsimonious than competing models from its class; it is able to accurately capture the joint dynamics of equity and VIX derivatives, while retaining the advantage over market models of analytic tractability.  We point out that this model was used in the context of pricing target volatility fund derivatives in \cite{Meyer12}.

The selection of a $3/2$ model for the underlying index is motivated by several observations in recent literature. There is both empirical and theoretical evidence suggesting that the $3/2$ model is a suitable candidate for modelling instantaneous variance. \cite{BJY06} conducted an empirical study on the time-series properties of instantaneous variance by using S\&P100 implied volatilities as a proxy.  The authors found that a linear-drift was rejected in favour of a non-linear drift and estimated that a variance exponent of approximately 1.3 was required to fit the data.  In a separate study, \cite{CarrSu07} proposed a new framework for pricing variance swaps and were able to support the findings of \cite{BJY06} using a purely theoretical argument. Furthermore, the excellent results obtained by \cite{Drimus11}, who employed the 3/2 model to price realized-variance derivatives, naturally encourage the application of the 3/2 framework to VIX derivatives.  Despite having a qualitative advantage over other stochastic volatility models \cite{Drimus11}, the $3/2$ model, or any augmented version of this model, has yet to be applied to the consistent pricing of equity and VIX derivatives.  The final motivating factor is the claim that jumps must be included in the dynamics of the underlying index to capture the upward-sloping implied volatility skew of VIX options \cite{Sepp08b}.

In related literature the only mention of the $3/2$ model in the context of VIX derivatives is in \cite{GoardMaz12}, where the problem is approached from the perspective of directly modelling the VIX. Closed-form solutions are found for VIX derivatives under the assumption that the VIX follows a $3/2$ process.  In this paper a markedly different approach is adopted.  Rather than specifying dynamics for the untradable VIX, without providing a connection to the underlying index, we follow the approach from \cite{LianZhu11}, where the dynamics of the underlying index are specified and an expression for the VIX is later derived. Our approach is superior to that of \cite{GoardMaz12}; issues of consistency are addressed directly and the model lends itself to a more intuitive interpretation.

The main contribution of this paper is the derivation of quasi-closed-form solutions for the pricing of VIX derivatives under the assumption that the underlying follows the 3/2 model.  The newly-found solutions retain the analytic tractability enjoyed by those found in the context of realized-variance products \cite{Drimus11}. The formulae derived in this paper allow for a numerical analysis to be performed to assess the appropriateness of the 3/2 framework for consistent modelling.  Upon performing the analysis we find that the pure-diffusion 3/2 model is capable of producing the commonly observed upward-sloping skew for VIX options.  This contradicts the previously made claims that pure-diffusion stochastic volatility models cannot consistently model VIX and equity derivatives \cite{Sepp08b}.  This desirable property distinguishes the 3/2 model from competing pure-diffusion stochastic volatility models.  We compare the 3/2 model to the Heston model and find that the latter produces downward-sloping implied volatilities for VIX options, whereas the former produces upward-sloping implied volatilities for VIX options.

Pure-diffusion volatility models, however, fail to capture features of implied volatility in equity options for short maturities \cite{Gatheral06}. To remedy this shortcoming jumps are introduced in the underlying index. The resulting $3/2$ plus jumps model is consequently studied in detail: first, by following the approach used for the pure-diffusion 3/2 model, we derive the conditions that ensure that the discounted stock price is a martingale under the pricing measure.  The novelty of this result is that we discuss whether a stochastic volatility model that allows for jumps is a martingale.  So far in the literature \cite{BayraktarKarXin11, Drimus11,Lewis00, MijatovicUrusov10} these results have been provided for pure-diffusion processes only, as they are based on Feller explosion tests \cite{KaratzasShreve00}.  Next, we produce the joint Fourier-Laplace transform of the logarithm of the index and the realized variance, which allows for the pricing of equity and realized-variance derivatives.  Though the $3/2$ model is not affine, we find that the joint Fourier-Laplace transform is exponentially affine in the logarithm of the stock price.  This allows for the simultaneous pricing of equity options across many strikes via the use of the Fourier-Cosine expansion method of \cite{FangOst08}. Such a finding is expected to significantly speed up the calibration procedure. The approach used in this paper is not restricted to the 3/2 plus jumps model and can be extended to a more general setting\footnote{The method is applicable to all conditionally Gaussian stochastic volatility models for which the Laplace transform of realized variance is known explicitly.}. In fact, we use this approach to obtain a closed-form solution for VIX options in the stochastic volatility plus jumps (SVJ) model, see  \cite{Bates96}, resulting in a small extension of the stochastic-volatility pricing formula presented in \cite{LianZhu11}.

The paper is structured as follows: in Section 2 we introduce the pure-diffusion $3/2$ model and present the empirical result that illustrates that this model is able to capture the joint characteristics of equity and index options.  We compare the pure-diffusion 3/2 model with the Heston model to highlight the difference in shape of the VIX implied volatilities.  The rest of the paper is concerned with the 3/2 plus jumps model.  Section 3 introduces the 3/2 plus jumps model and establishes the conditions that ensure that the discounted stock price is a martingale under the assumed pricing measure.  Next, characteristic functions for the logarithm of the index and the realized variance are derived. Finally, a quasi-analytic formula for call and put options on the VIX is derived.  Conclusions are stated in Section \ref{secconc}.

\section{Pure-Diffusion $3/2$ Model Applied to the VIX} \label{seccomp}

%
In this section we introduce the pure-diffusion $3/2$ model and present numerical results to illustrate that this model is able to produce upward-sloping implied volatility skews in VIX options.  On a probability space $( \Omega, \mathcal{F}, \mathbf{Q} )$, we introduce the risk-neutral dynamics for the stock price and the variance processes
\begin{align*}
 d S_t &= S_{t-} \left( r   dt +\rho \sqrt{V_t} dW^1_t + \sqrt{1 - \rho^2} \sqrt{V_t} dW^2_t\right)\, , \\
 d V_t &= \kappa V_t ( \theta - V_t) dt + \epsilon (V^{3/2}_t) d W^1_t \, ,
\end{align*}
starting at $S_0 > 0 $ and $V_0 >0$ respectively, where $W = \left( W^1 \, , \, W^2 \right)$ is a two-dimensional Brownian motion under the risk-neutral measure. All stochastic processes are adapted to a filtration $\left( \mathcal{F}_t \right)_{t \in [0,T]}$ that satisfies the usual conditions with $\mathcal{F}_0$ being the trivial sigma field.  Furthermore, $r$ denotes the constant risk-free interest rate and $\rho$ the instantaneous correlation between the return on the index and the variance process.  As per usual, $\rho$ satisfies $-1 \leq \rho \leq 1$ and $\kappa$, $\theta$, and $\epsilon$ are assumed to be strictly positive. It is worth noting that unlike the Heston model the above model has a non-linear drift.  The speed of mean reversion is not constant, as is the case for the Heston model, but is now a stochastic quantity and is proportional to the instantaneous variance.

Throughout this paper we follow \cite{LianZhu11} and \cite{ZhangZhu06} and define the VIX via
\begin{equation} \label{eqdefVIX}
VIX^2_t := - \frac{2}{\tau} E \left( \ln \left( \frac{S_{t+\tau}}{S_t e^{r\tau}} \right) \vert \mathcal{F}_t \right) \times 100^2\, ,
\end{equation}
where $\tau= \frac{30}{365}$. We emphasize that this approach is superior to that of \cite{GoardMaz12}; instead of modelling the VIX directly, without providing a connection to the underlying index, the expression for the VIX in Equation \eqref{eqdefVIX} is derived directly from the dynamics of the underlying index.

Implied volatilities of VIX options exhibit a positive volatility skew, as stated in \cite{Sepp11}. The author asserts that ``SV [stochastic volatility models] without jumps are not consistent with the implied volatility skew observed in options on the VIX..." and that ``...only the {SV [stochastic volatility] model with appropriately chosen jumps} can fit the implied VIX skew". To assess these statements we calculate implied volatilities under the pure-diffusion $3/2$ model.  The price of a call option on the VIX is now given by
\begin{align}
\lefteqn{ e^{- r T} E \left( \left( VIX_T - K \right)^+ \right)}\nonumber
 \\ &= e^{- r T} \int^{\infty}_0 \left( \sqrt{ \frac{g(y, \tau)}{\tau} \times 100^2} - K \right)^+ \frac{1}{y^2} \frac{e^{\kappa \theta T}}{c(T)} p\left( \delta, \alpha, \frac{e^{\kappa \theta T}}{y c(T)}\right) \, dy,\label{eqpurediffcall}
\end{align}
where
 \begin{displaymath}
g (x, \tau) = - \frac{\partial}{\partial l} E \left( \exp \left( - l \int^{t + \tau}_t V_s ds  \right)\bigg|V_t =x  \right) \bigg|_{l=0} \,,
\end{displaymath}
$\delta= \frac{4(\kappa + \epsilon^2)}{\epsilon^2}$, $\alpha =\frac{1}{V_t c(T-t)}$, $c(t) = \epsilon^2 ( \exp \left( \kappa \theta t \right) - 1) / ( 4 \kappa \theta )$, and $p(\nu, \beta, \cdot)$ denotes the probability density function of a non-central chi-squared random variable with $\nu$ degrees of freedom, and non-centrality parameter $\beta$.  Equation \eqref{eqpurediffcall}
is a special case of the forthcoming Proposition \ref{propderivVIX}.

In order to provide the reader with parameters that are verifiable we use the parameters provided in \cite{Drimus11} for realized-variance derivatives.  Using Equation \eqref{eqpurediffcall} and the parameters
\begin{displaymath}
V_0 =0.2450^2 \, , \, \kappa = 22.84 \, , \, \theta = 0.4669^2 \, , \, \epsilon =8.56 \,,\mbox{ and } \, \rho=-0.99 \, ,
\end{displaymath}
we price VIX options for $T=3$ months and $T=6$ months. Then using Black's formula we find an implied volatility, $\zeta$, such that
\begin{displaymath}
E \left( \left( VIX_T - K \right)^+ \right) = E \left( VIX_T \right) N(d_1) - K N(d_2) \, ,
\end{displaymath}
where
\begin{displaymath}
d_1 = \frac{ \log ( E \left( VIX_T \right) / K) + \zeta^2 T }{\zeta \sqrt{T}} \, ,\mbox{ and }\, d_2 = d_1 - \sqrt{T} \zeta \, .
\end{displaymath}
\begin{figure}
\begin{center}
\includegraphics[scale=0.4,angle=0]{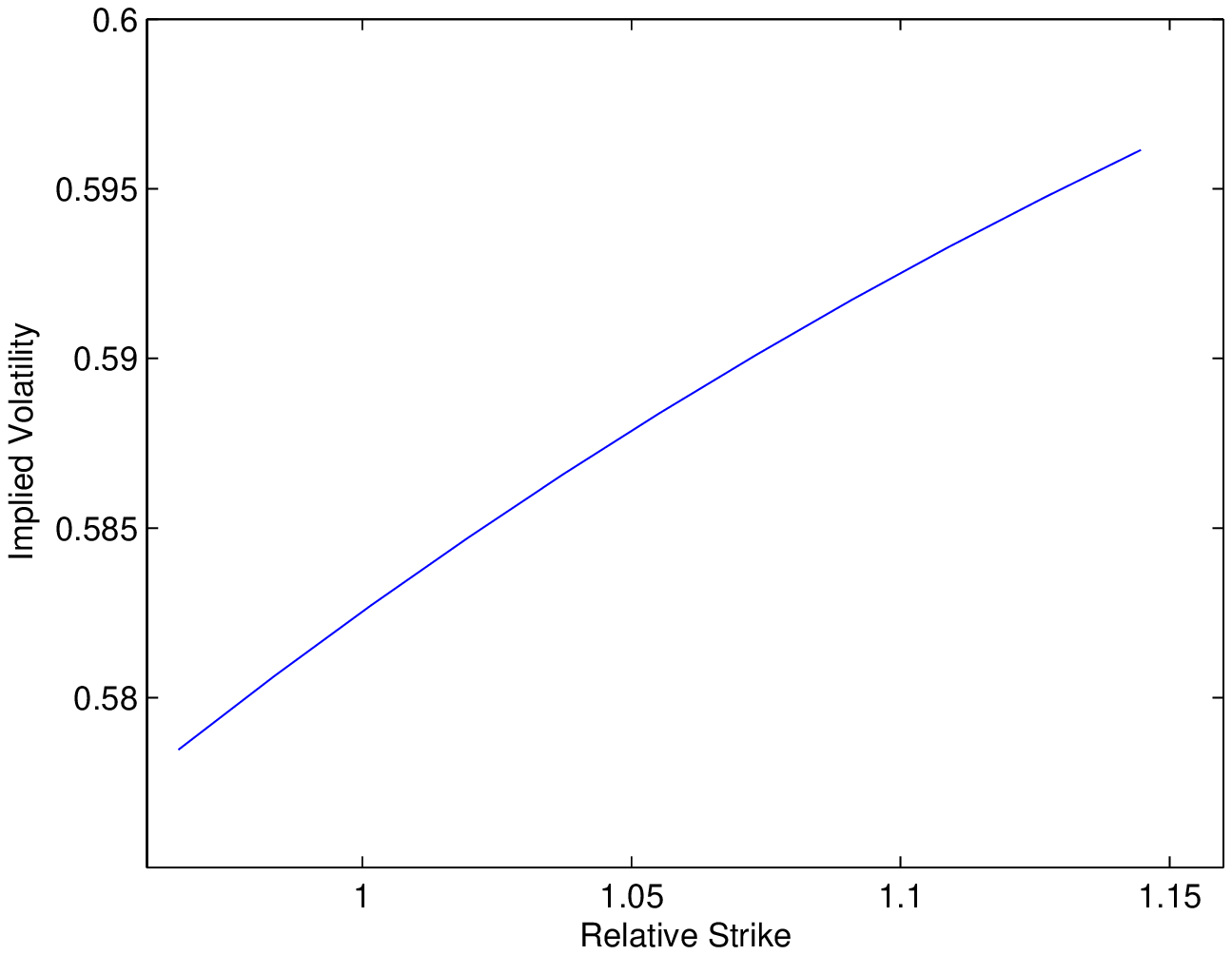}\includegraphics[scale=0.4,angle=0]{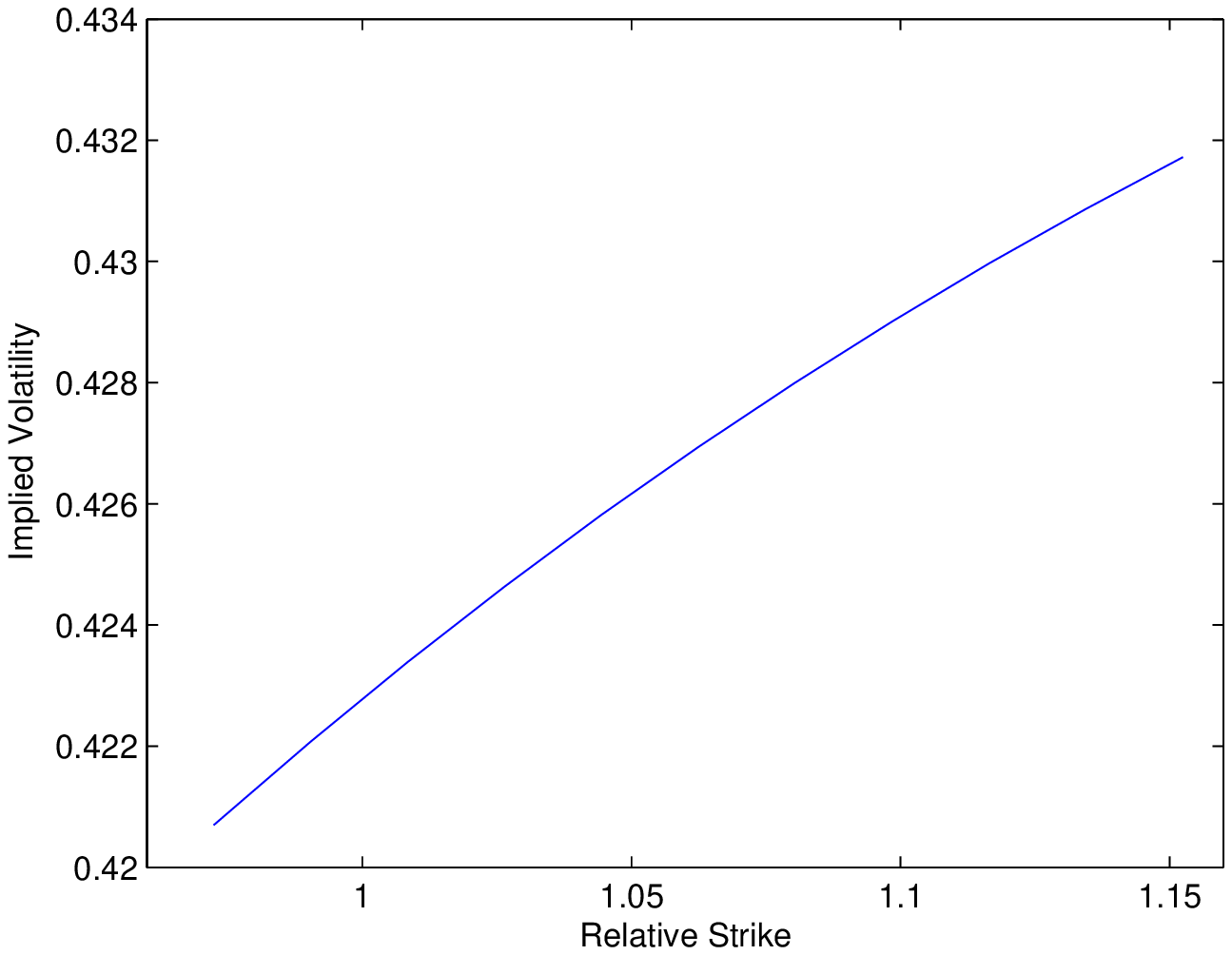}
\end{center}
\caption{\small Implied volatilities of call options on the VIX using the $3/2$ model parameters obtained in \cite{Drimus11} with $T=3$ months (left) and $T=6$ months (right).\label{fig3over2implvols}}
\end{figure}
The positive skew of the implied volatility of VIX options is shown in Figure \ref{fig3over2implvols} for maturities $T=3$ months and $T=6$ months, demonstrating that the dynamics of the pure-diffusion 3/2 model are in fact rich enough to fit the implied VIX skew. These observations support the findings of \cite{BJY06}, \cite{CarrSu07} and \cite{Drimus11} that suggest that the $3/2$ model is a good candidate for the pricing of volatility derivatives.

Next we compare the results produced by the $3/2$ model to the Heston model, which is commonly used for the pricing VIX derivatives \cite{LianZhu11, Sepp08b, ZhangZhu06, ZhuLian11}. A priori this seems to be a fair comparison; both models are stochastic volatility models, the two models have the same number of parameters, and enjoy the same level of analytical tractability.  To compute VIX option prices and the corresponding implied volatilities we use the pricing formula provided by \cite{LianZhu11} (see their Proposition 3).  Again, we use the following parameters obtained in \cite{Drimus11} for the Heston model,
\begin{displaymath}
V_0 = 0.2556^2 \, , \, \kappa = 3.8 \, , \, \theta=0.3095^2 \, , \, \epsilon =0.9288 \, , \mbox{ and } \rho=-0.7829 \, ,
\end{displaymath}
The result is shown in Figure \ref{figHestonimplvols}. Unlike for the $3/2$ model the implied volatilities are downward sloping, which is not consistent with market data.

\begin{figure}
\begin{center}
\includegraphics[scale=0.4,angle=0]{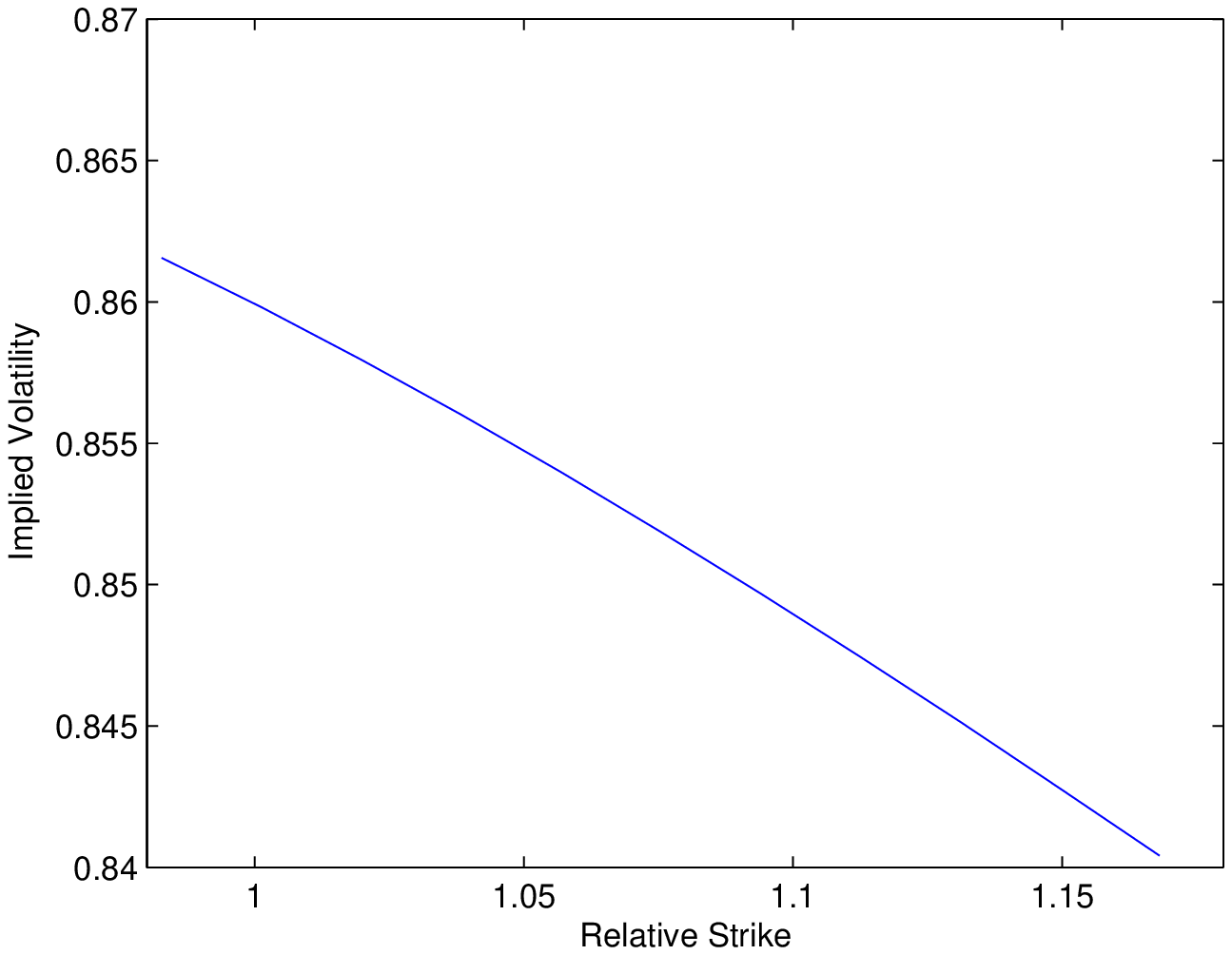}\includegraphics[scale=0.4,angle=0]{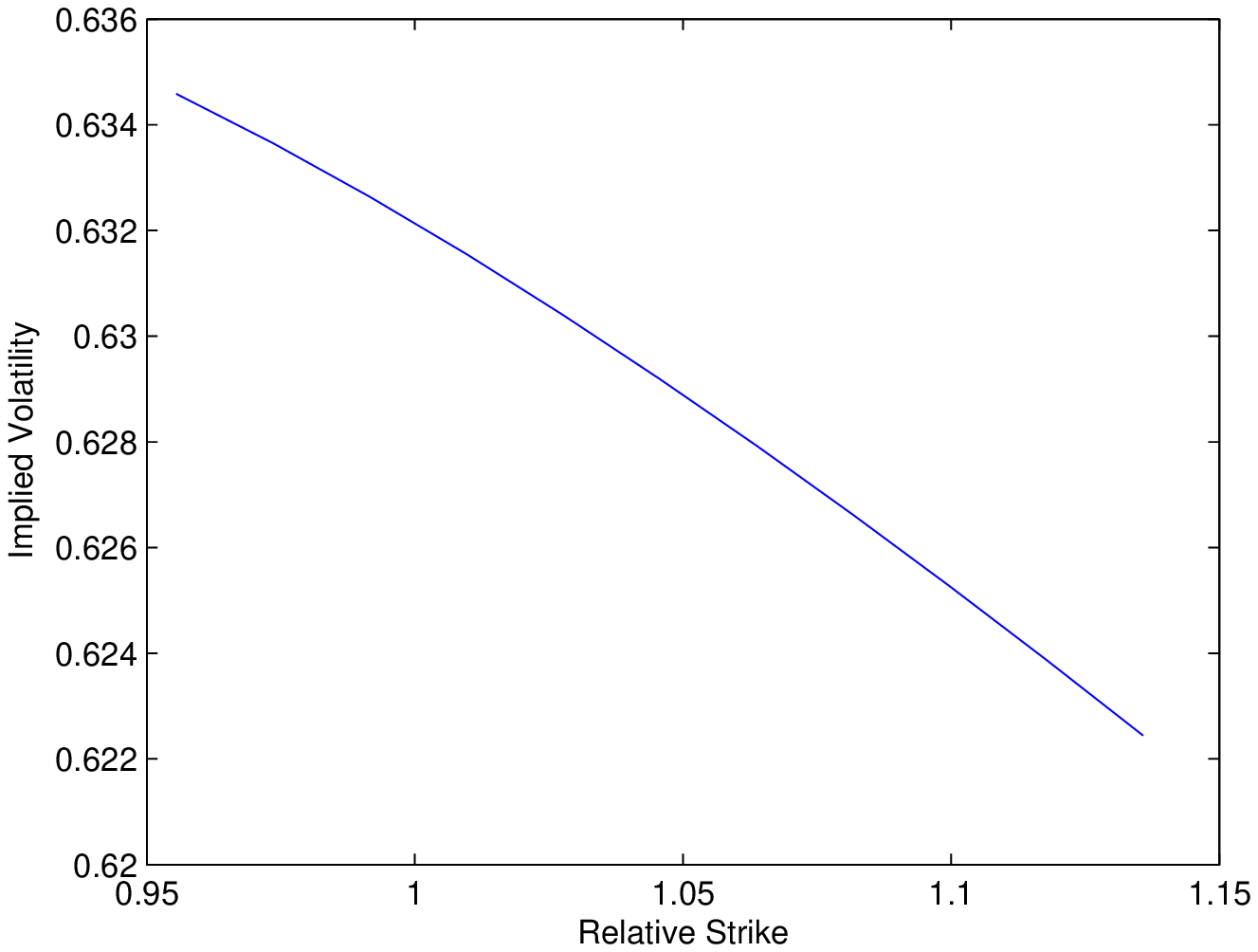}
\end{center}
\caption{\small Implied volatilities of call options on the VIX using the Heston model parameters obtained in \cite{Drimus11} with $T=3$ months (left) and $T=6$ months (right).\label{figHestonimplvols}}
\end{figure}

\section{The $3/2$ plus Jumps Model} \label{secintro}

The previous section demonstrated that the pure-diffusion $3/2$ model is capable of capturing the upward-sloping features of VIX option implied volatilities. Pure diffusion volatility models, however, fail to capture features of equity implied volatility for short expirations. To demonstrate this, we calibrate the pure-diffusion $3/2$ model to short-maturity S\&P500 option data. In Figure \ref{fig3over2fit} we present option implied volatilities for the S\&P500 on the 8th March 2012 for a maturity of nine days. The data set clearly exhibits a volatility smile. We find that though the pure-diffusion $3/2$ model captures the negative skew, it struggles to capture the smile.
\begin{figure}
\begin{center}
\includegraphics[scale=0.4,angle=0]{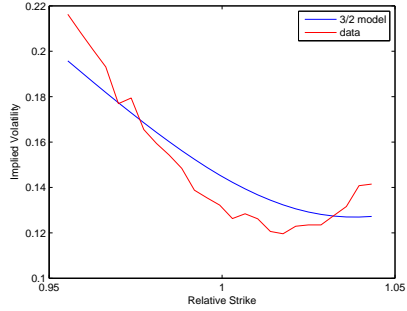}
\end{center}
\caption{\small Fit of the $3/2$ model to $9$ days S\&P500 implied volatilities on $8$ March, 2012. Model parameters obtained $\epsilon = 70.56$, $V_0 = 0.19^2$, $\kappa =30.84$, $\theta = 0.48^2$, $\rho= -0.55$. \label{fig3over2fit}}
\end{figure}

This observation motivates the extension of the model to allow for jumps in the underlying index in order to obtain better fit for short expirations. Consider the dynamics for the underlying index given by
\begin{align}
\label{eqstockprice1} d S_t &= S_{t-} \left( ( r - \lambda \bar{\mu} ) dt +\rho \sqrt{V_t} dW^1_t + \sqrt{1 - \rho^2} \sqrt{V_t} dW^2_t + (e^{\xi}-1) d N_t \right) \, ,\\
\label{eqvarprocess} d V_t &= \kappa V_t ( \theta - V_t) dt + \epsilon (V^{3/2}_t) d W^1_t \, ,
\end{align}
where we denote by $N$ a Poisson process at constant rate $\lambda$, by $e^{\xi}$ the relative jump size of the stock and $N$ is adapted to a filtration $\left( \mathcal{F}_t \right)_{t \in [0,T]}$.  The distribution of $\xi$ is assumed to be normal with mean $\mu$ and variance $\sigma^2$. The parameters $\mu$, $\bar{\mu}$, and $\sigma$ satisfy the following relationship
\begin{displaymath}
		\mu = \log( 1 + \bar{\mu} ) - \frac{1}{2} \sigma^2 \, .
\end{displaymath}
All other stochastic processes and parameters have been introduced in Section \ref{seccomp}.
Integrating Equation \eqref{eqstockprice1} yields
\begin{equation} \label{eqintegratedstock}
S_t = \tilde{S}_t \prod^{N_t}_{j=1} e^{\tilde{\xi}_j}
\end{equation}
where
\begin{equation} \label{eqcontpartstock}
\tilde{S}_t = S_0 \exp \left( ( r - \lambda \bar{\mu} ) t - \frac{1}{2} \int^t_0 V_s ds + \rho \int^t_0 \sqrt{V_s} dW^1_s + \sqrt{1 - \rho^2}  \int^t_0 \sqrt{V_s}dW^2_s \right) \, ,
\end{equation}
and we use $\tilde{\xi}_j$ to denote the logarithm of the relative jump size of the $j$th jump. Since the model
\eqref{eqstockprice1}~-~\eqref{eqvarprocess} is not affine, Equation \eqref{eqintegratedstock} gives us an important starting point for our analysis. In particular, one can now determine if the discounted stock price is a martingale under our assumed pricing measure.
\begin{proposition} \label{propmartingality} Let $S$ and $V$ be given by Equations \eqref{eqstockprice1} and 		
	\eqref{eqvarprocess} respectively. Then the discounted stock price $\bar{S}_t = \frac{S_t}{e^{rt}}$ is a martingale under $\mathbf{Q}$, if and only if
\begin{equation} \label{eqmartingalecond}
\kappa- \epsilon \rho \geq - \frac{\epsilon^2}{2} \, .
\end{equation}
\end{proposition}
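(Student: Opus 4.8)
The plan is to exploit the multiplicative decomposition in Equation \eqref{eqintegratedstock} and split off the jump contribution. Writing the discounted price as $\bar S_t = Y_t M_t$, where
\[
Y_t = S_0\exp\left(-\tfrac12\int_0^t V_s\,ds + \rho\int_0^t\sqrt{V_s}\,dW^1_s + \sqrt{1-\rho^2}\int_0^t\sqrt{V_s}\,dW^2_s\right)
\]
is the discounted continuous part and $M_t = e^{-\lambda\bar\mu t}\prod_{j=1}^{N_t}e^{\tilde\xi_j}$ collects the jumps together with their compensator. Because $\xi$ is Gaussian we have $E(e^\xi)<\infty$, so $M$ is the Dol\'eans exponential of a compensated compound Poisson process and is therefore a genuine $\mathbf{Q}$-martingale with $M_0=1$ for \emph{every} choice of parameters. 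Crucially, $M$ is driven by $(N,\{\tilde\xi_j\})$, which is independent of the Brownian motion $W$ driving $Y$.

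Given this, I would first reduce the problem to the pure-diffusion factor. Since $Y$ is a nonnegative local martingale (a stochastic exponential of a continuous local martingale) it is a supermartingale, so $E(Y_t)\le Y_0=S_0$, and $Y$ is a true martingale if and only if $E(Y_t)=S_0$ for all $t$. Using independence of $Y$ and $M$ together with $E(M_t)=1$ gives $E(\bar S_t)=E(Y_t)E(M_t)=E(Y_t)$; conversely, if $Y$ is a martingale then, being independent of the martingale $M$, the product $\bar S=YM$ is a martingale. Hence $\bar S$ is a $\mathbf{Q}$-martingale if and only if $Y$ is, and the jump component plays no role in the criterion. This isolates exactly what must be proved and explains why condition \eqref{eqmartingalecond} involves only $\kappa,\epsilon,\rho$.

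The heart of the argument is then to determine when the pure-diffusion factor $Y$ loses no mass, for which I would follow the measure-change/explosion-test methodology underlying \cite{Lewis00, MijatovicUrusov10, Drimus11}. Formally, $Y$ is the density process of a candidate equivalent measure under which, by Girsanov, the increment $dW^1_t$ acquires the drift $\rho\sqrt{V_t}\,dt$; feeding this into Equation \eqref{eqvarprocess} shifts the variance dynamics to
\[
dV_t = \left(\kappa\theta V_t - (\kappa-\epsilon\rho)V_t^2\right)dt + \epsilon V_t^{3/2}\,d\widetilde W^1_t .
\]
The local martingale $Y$ fails to be a true martingale precisely when this auxiliary diffusion explodes to $+\infty$ in finite time with positive probability. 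To convert $V\to\infty$ into a boundary problem at the origin, I would substitute $U_t=1/V_t$; an It\^o computation shows that $U$ solves the square-root (CIR-type) equation $dU_t=\big((\kappa-\epsilon\rho+\epsilon^2)-\kappa\theta U_t\big)dt-\epsilon\sqrt{U_t}\,d\widetilde W^1_t$, so explosion of $V$ corresponds to $U$ reaching zero. Applying Feller's boundary classification to $U$ \cite{KaratzasShreve00}, the origin is unattainable — equivalently $Y$ is a true martingale — if and only if $2(\kappa-\epsilon\rho+\epsilon^2)\ge\epsilon^2$, which rearranges to exactly \eqref{eqmartingalecond}.

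I expect the main obstacle to be the circularity in the Girsanov step: the measure change is only legitimate once $Y$ is known to be a martingale, which is the very thing we are trying to decide. The clean way around this is a localization argument — stop the variance process at the exit times $\tau_n$ of the interval $(1/n,n)$, perform the now-rigorous measure change for the bounded stopped exponential, and identify $\lim_{n\to\infty}\mathbf{Q}(\tau_n\le T)$ under the auxiliary measure with the mass defect $S_0-E(Y_T)$. The careful bookkeeping in passing to the limit is the delicate point; by contrast, the reciprocal-CIR computation and the invocation of the Feller test are routine.
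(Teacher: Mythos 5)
Your proposal is correct and follows essentially the same route as the paper: both factor the discounted price into an independent, compensated jump martingale (whose expectation is identically one) times the discounted pure-diffusion part, thereby reducing the question to the martingale property of the latter. Where the paper simply cites Drimus's Equation (4) for the pure-diffusion criterion, you unpack that citation by carrying out the Girsanov/Feller-test argument explicitly; your auxiliary dynamics, the reciprocal CIR substitution, and the boundary classification all check out and reproduce exactly the condition $\kappa - \epsilon\rho \geq -\tfrac{\epsilon^2}{2}$.
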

\begin{proof}
We compute
\begin{align}
		E \left( \bar{S}_T \vert \mathcal{F}_t \right)
 &= \bar{S}_t E \left( \exp \left( - \frac{1}{2} \int^T_t V_s ds + \rho \int^T_t \sqrt{V_s} dW^1_s + \sqrt{1 - \rho^2} \int^T_t \sqrt{V_s} dW^2_s \right) \bigg| \mathcal{F}_t   \right)\nonumber
\\ & \hspace{0.5cm}\times E \left( \prod^{N_T}_{j=N_t + 1} e^{\tilde{\xi}_j} \right)e^{-\lambda \overline{\mu}(T-t)}\nonumber
\\ &= \bar{S}_t E \left( \exp \left( - \frac{1}{2} \int^T_t V_s ds + \rho \int^T_t \sqrt{V_s} dW^1_s + \sqrt{1- \rho^2} \int^T_t \sqrt{V_s} dW^2_s \right) \bigg| \mathcal{F}_t \right) \, \label{eqpuredifff}.
\end{align}
Equation \eqref{eqpuredifff} is clearly independent of the jump component of $S$. Hence $\bar{S}$ is a martingale under $\mathbf{Q}$ if and only if the corresponding discounted pure-diffusion model, $\tilde{S}_t \frac{e^{\lambda \bar{\mu} t}}{e^{rt}}$, is a martingale under $\mathbf{Q}$. Since this question was answered in \cite{Drimus11}, see his Equation (4), the desired result follows.
\end{proof}
Starting with \cite{Sin98}, there has been a growing body of literature dealing with the question of whether the discounted stock price in a particular stochastic volatility model is a martingale or a strict local martingale under the pricing measure, e.g. \cite{AndersenPiterbarg07}, \cite{BayraktarKarXin11}, \cite{Lewis00}, and \cite{MijatovicUrusov10}. The specification of the model, in particular Equation \eqref{eqintegratedstock}, allows for the application of the above results, which were all formulated for pure diffusion processes. 
We remark that Condition \eqref{eqmartingalecond} is the same as the one presented in \cite{Drimus11}. Besides analyzing the martingale property of the model \eqref{eqstockprice1}-\eqref{eqvarprocess} we also compute functionals, which are required for the pricing of equity and VIX derivatives.
\subsection{Equity and Realized-Variance Derivatives} \label{secequityrealvarder}
In this section we derive formulae for the pricing of equity and realized-variance derivatives under the $3/2$ plus jumps model.  {We demonstrate that by adding jumps to the 3/2 model a better fit to the short-term smile can be obtained without incurring a loss in analytic tractability.}  Consider
\begin{displaymath}
X_t := \log(S_t) \, , \, t \geq 0 \, ,
\end{displaymath}
and define the realized variance as the quadratic variation of $X$, i.e.
\begin{displaymath}
RV_T := \int^T_0 V_s ds + \sum^{N_T}_{j=1} (\tilde{\xi}_j)^2 \, , \, t \geq 0 \, ,
\end{displaymath}
where $RV_T$ denotes realized variance and $T$ denotes the maturity of interest. We have the following result, which is the analogue of Proposition 2.2 in \cite{Drimus11}.

\begin{proposition} \label{propjointFLtransform} Let $u \in \mathbb{R}$ and $l \in \mathbb{R}^+$. In the $3/2$ plus jumps model, the joint Fourier-Laplace transform of $X_T$ and $(RV_T - RV_t)$ is given by
\begin{eqnarray*}
\lefteqn{E \left( \exp \left( i u X_T - l (RV_T - RV_t) \right) \vert X_t , V_t \right) }\\
&=&\exp \left( i u \left(X_t+(r-\lambda\overline{\mu})(T-t)\right) \right) \frac{\Gamma( \gamma - \alpha)}{\Gamma( \gamma)} \left( \frac{2}{\epsilon^2 y(t, V_t)}  \right)^{\alpha}\\
&&\times\,M\left(\alpha, \gamma, \frac{-2}{\epsilon^2 y(t,V_t)} \right) \exp \left( \lambda (T-t) ( a -1) \right)\, ,
\end{eqnarray*}
where
\begin{equation*}
y(t,V_t) = V_t \frac{\left( e^{\kappa \theta (T-t)} -1 \right) }{\kappa \theta}\, ,
\end{equation*}
\begin{equation*}
\alpha = - \left( \frac{1}{2} - \frac{p}{\epsilon^2} \right) + \sqrt{\left( \frac{1}{2} - \frac{p}{\epsilon^2} \right)^2 + 2 \frac{q}{\epsilon^2}}\, ,
\hspace{0.5cm}\gamma = 2 \left( \alpha +1 - \frac{p}{\epsilon^2} \right)\, ,
\hspace{0.5cm}p = - \kappa + i \epsilon \rho u\, ,
\end{equation*}
\begin{equation*}
q = l + \frac{i u}{2} + \frac{u^2}{2}\, \hspace{0.25cm}\mbox{and}\hspace{0.25cm}
a = \frac{\exp \left( - \frac{2 l \mu^2 - 2 i \mu u + \sigma^2 u^2}{2 + 4 l \sigma^2} \right) }{\sqrt{1 + 2 l \sigma^2}} \, ,
\end{equation*}and $M(a,b,c)$ denotes the confluent hypergeometric function.
\end{proposition}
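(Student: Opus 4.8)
The plan is to use the integrated representation \eqref{eqintegratedstock}--\eqref{eqcontpartstock} to separate the continuous and jump contributions, and then to factor the conditional expectation. Conditioning on $(X_t,V_t)$, I would write $X_T = \log \tilde{S}_T + \sum_{j=N_t+1}^{N_T}\tilde{\xi}_j$ and $RV_T - RV_t = \int_t^T V_s\,ds + \sum_{j=N_t+1}^{N_T}(\tilde{\xi}_j)^2$, so that the exponent $iuX_T - l(RV_T-RV_t)$ splits into a functional of the Brownian data $(W^1,W^2,V)$ and a functional of the jump data $(N,\{\tilde{\xi}_j\})$. Since $W$, the Poisson process $N$ and the i.i.d. jump sizes are mutually independent, these two functionals are conditionally independent, so after extracting the deterministic factor $\exp\!\big(iu(X_t+(r-\lambda\bar{\mu})(T-t))\big)$ the expectation factors as a continuous part times a jump part; this is the same factorisation already exploited in the proof of Proposition \ref{propmartingality}.

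For the continuous factor I would observe that it is exactly the joint Fourier--Laplace transform of $\big(\log\tilde{S}_T,\int_t^T V_s\,ds\big)$ in the pure-diffusion $3/2$ model, because the continuous part of the realized variance is precisely $\int_t^T V_s\,ds$ and $\tilde{S}$ obeys the pure-diffusion dynamics with drift $(r-\lambda\bar{\mu})$. I would therefore invoke Proposition 2.2 of \cite{Drimus11}, whose conclusion supplies the factor $\tfrac{\Gamma(\gamma-\alpha)}{\Gamma(\gamma)}\big(\tfrac{2}{\epsilon^2 y(t,V_t)}\big)^{\alpha} M\!\big(\alpha,\gamma,\tfrac{-2}{\epsilon^2 y(t,V_t)}\big)$ with $\alpha,\gamma,p,q,y$ as stated; the only change relative to \cite{Drimus11} is the replacement of $r$ by $r-\lambda\bar{\mu}$ in the deterministic exponent. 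For a self-contained derivation one first integrates out the independent $W^2$ (producing the term $-\tfrac12 u^2(1-\rho^2)\int_t^T V_s\,ds$), then removes the remaining stochastic integral by expressing $\int_t^T\sqrt{V_s}\,dW_s^1$ through $\log(V_T/V_t)$ and $\int_t^T V_s\,ds$ via It\^o's formula and the dynamics \eqref{eqvarprocess}, and finally uses that $Y=1/V$ is a square-root (CIR) process with a non-central chi-squared law; the residual expectation of a power of $V_T$ against $\exp(-q\int_t^T V_s\,ds)$ then reduces to the confluent-hypergeometric integral that produces the $\Gamma$- and $M$-terms.

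The jump factor I would compute directly. Conditioning on $N_T-N_t=n$, which is Poisson with mean $\lambda(T-t)$, and using that the $\tilde{\xi}_j$ are i.i.d. $N(\mu,\sigma^2)$, each jump contributes the same factor $a:=E\big(e^{iu\xi - l\xi^2}\big)$, so summing the Poisson series yields $\exp\!\big(\lambda(T-t)(a-1)\big)$. It then remains to evaluate $a$ by completing the square in the Gaussian integral: for $l\in\mathbb{R}^+$ one has $1+2l\sigma^2>0$, and a routine calculation gives $a=(1+2l\sigma^2)^{-1/2}\exp\!\big(-\tfrac{2l\mu^2-2i\mu u+\sigma^2u^2}{2+4l\sigma^2}\big)$, matching the stated expression.

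The genuinely new content, relative to \cite{Drimus11}, is the jump factor together with the verification that introducing jumps leaves the continuous transform untouched; both become routine once the independent factorisation is in place. I therefore expect the main obstacle to be not a single hard computation but the bookkeeping required to confirm that the continuous functional coincides \emph{exactly} with Drimus's transform, so that $\alpha,\gamma,p,q$ may be quoted verbatim, while correctly tracking the drift shift $r\mapsto r-\lambda\bar{\mu}$ and the convergence condition $l>0$ in the Gaussian integral.
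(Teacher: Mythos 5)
Your proposal is correct and follows essentially the same route as the paper: factor the transform into a continuous part and a jump part by independence, quote the known $3/2$ Fourier--Laplace transform for the continuous factor (the paper cites \cite{CarrSu07} and \cite{Lewis00} where you cite \cite{Drimus11}, but these are the same underlying result), and evaluate the jump factor via the Gaussian integral for $E(e^{iu\xi - l\xi^2})$ together with the Poisson compounding identity $E(c^{N_T-N_t})=\exp(\lambda(T-t)(c-1))$. The additional self-contained sketch of the continuous transform goes beyond what the paper records but does not change the argument.
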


\begin{proof} The proof is completed by noting that
\begin{eqnarray*}
\lefteqn{E \left( \exp \left( i u X_T - l (RV_T - RV_t) \right) \bigg| X_t , V_t \right)}
  \\ & = & \exp \left( i u X_t \right) E \left( \exp \left( i u \log\left( \frac{\tilde{S}_T}{\tilde{S}_t} \right) -l\int_t^TV_sds   \right) \bigg| V_t \right)
\\ &&  E \left( \exp \left( i u \sum^{N_T}_{j= N_t +1} \tilde{\xi}_j - l \sum^{N_T}_{j=N_t+1} ( \tilde{\xi}_j )^2 \right) \right) \, .
\end{eqnarray*}
The first conditional expectation was computed in \cite{CarrSu07} and \cite{Lewis00} and is given by
\begin{align*}
E \left( \exp \left( i u \log\left( \frac{\tilde{S}_T}{\tilde{S}_t} \right) -l\int_t^TV_sds   \right) \bigg| V_t \right)
		&=\exp \left( i u(r-\lambda\overline{\mu})(T-t) \right) \frac{\Gamma( \gamma - \alpha)}{\Gamma( \gamma)} \left( \frac{2}{\epsilon^2 y(t, V_t)}  \right)^{\alpha}\\
		&\hspace{0.5cm}\times M\left(\alpha, \gamma, \frac{-2}{\epsilon^2 y(t,V_t)} \right) \, .
\end{align*}
Furthermore, it can be seen that
\begin{displaymath}
E \left( \exp \left( i u \tilde{\xi}_j - l \tilde{\xi}^2_j \right) \right) = \frac{ \exp \left( - \frac{ 2 l \mu^2 -2 i \mu u + \sigma^2 u^2 }{2 + 4 l \sigma^2} \right) }{\sqrt{1+ 2 l \sigma^2}}
\end{displaymath}
and for $c>0$
\begin{displaymath}
E \left( c^{N_T - N_t} \right) = \exp \left( \lambda (T-t) ( c-1)  \right) \, .
\end{displaymath}
\end{proof}
Equity and realized-variance derivatives can now be priced using Proposition \ref{propjointFLtransform}.  For equity derivatives pricing requires the performance of a numerical Fourier inversion, such as those presented in \cite{CarrMad99} and \cite{Lewis00}.   Furthermore, since the characteristic function of $X_T$ is exponentially affine in $X_t$, we can apply the Fourier-Cosine expansion method as described in Section 3.3 in \cite{FangOst08}.  This allows for the simultaneous pricing of equity options across many strikes, which is expected to significantly speed up the calibration procedure.   For realized-variance derivatives one can employ a numerical Laplace inversion, see \cite{CGMY05}, or the more robust control-variate method developed in \cite{Drimus11}. We comment that implied volatility approximations for small log-forward moneyness and time to maturity for the $3/2$ plus jumps model can be obtained from \cite{MedvedevSca07}, as their Proposition 3 covers the $3/2$ plus jumps model.
\begin{figure}
\begin{center}
\includegraphics[scale=0.4,angle=0]{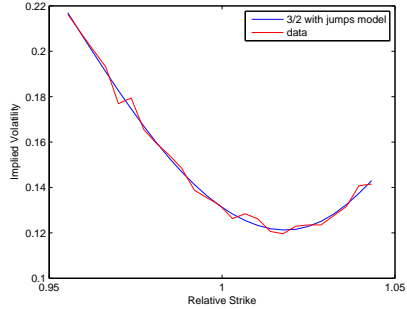}
\end{center}
\caption{\small Fit of the $3/2$ plus jumps model to $9$ days S\&P500 implied volatilities on $8$ March, 2012. Model parameters obtained $\epsilon = 50.56$, $V_0 = 0.0822^2$, $\kappa =30.84$, $\theta = 0.10^2$, $\rho= -0.57$, $\lambda=0.18$, $\mu=-0.30$, $\sigma=0.39$.\label{fig3over2plusjumpsfit}}
\end{figure}

%
%
This section is concluded with a calibration of the $3/2$ plus jumps model to short-maturity S\&P500 option data. The inclusion of jumps improves the fit significantly as illustrated when comparing Figures \ref{fig3over2plusjumpsfit} and \ref{fig3over2fit}. The values for $V_0$ and $\theta$ decrease when we allow for jumps. Also, the parameters for the jump component are roughly in line with those obtained for SVJ models (see for example \cite{Gatheral06}).

\subsection{VIX Derivatives} \label{secVIXDer}

In this section we provide a general pricing formula for (European) call and put options on the VIX by extending the results of \cite{ZhangZhu06}.  The newly-found formula is then used for the pricing of VIX derivatives when the index follows a $3/2$ plus jumps process.  Of course, the results shown in Section \ref{seccomp} are obtained by setting the jump intensity $\lambda$ equal to $0$. We recall the definition of the VIX \eqref{eqdefVIX},
\begin{displaymath}
VIX^2_t := - \frac{2}{\tau} E \left( \ln \left( \frac{S_{t+\tau}}{S_t e^{r\tau}} \right) \vert \mathcal{F}_t \right) \times 100^2\, ,
\end{displaymath}
where $\tau= \frac{30}{365}$. The following result, which is an extension of Proposition 1 in \cite{ZhangZhu06}, allows for the derivation of a pricing formula for VIX options.

\begin{lemma} \label{propVIXformula} Let $S$, $V$, and $VIX^2$ be defined by Equations \eqref{eqstockprice1}, \eqref{eqvarprocess}, and \eqref{eqdefVIX}. Then
\begin{displaymath}
VIX^2_t = \left( \frac{g(V_t,  \tau)}{\tau}+2 \lambda ( \bar{\mu} - \mu) \right)  \times 100^2  \, , \, t \geq 0 \, ,
\end{displaymath}
where
\begin{displaymath}
g (x, \tau) = - \frac{\partial}{\partial l} E \left( \exp \left( - l \int^{t + \tau}_t V_s ds  \right)\bigg|V_t =x  \right) \bigg|_{l=0} \, .
\end{displaymath}
\end{lemma}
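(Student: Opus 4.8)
The plan is to start from the definition \eqref{eqdefVIX} of $VIX^2_t$ and evaluate the conditional expectation $E(\ln(S_{t+\tau}/(S_t e^{r\tau}))\,|\,\mathcal{F}_t)$ explicitly via the integrated representation of the stock price. First I would take logarithms in the factorisation \eqref{eqintegratedstock}--\eqref{eqcontpartstock} to obtain
\begin{align*}
\ln\!\left(\frac{S_{t+\tau}}{S_t}\right) &= (r-\lambda\bar{\mu})\tau - \tfrac{1}{2}\int_t^{t+\tau}V_s\,ds + \rho\int_t^{t+\tau}\sqrt{V_s}\,dW^1_s \\
&\quad + \sqrt{1-\rho^2}\int_t^{t+\tau}\sqrt{V_s}\,dW^2_s + \sum_{j=N_t+1}^{N_{t+\tau}}\tilde{\xi}_j \, .
\end{align*}
Subtracting $r\tau$ and applying $E(\cdot\,|\,\mathcal{F}_t)$ then reduces the problem to three pieces: the deterministic drift, the expected integrated variance, and the expected sum of jumps.

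The second step is to dispatch the remaining stochastic terms. Because the integrands are progressively measurable and square-integrable on $[t,t+\tau]$ (finiteness of $E(\int_t^{t+\tau}V_s\,ds\,|\,\mathcal{F}_t)$ being exactly the quantity identified below), each It\^o integral is a genuine martingale and contributes zero. For the compound-Poisson term I would invoke Wald's identity, using the independence of $N$ from the $\tilde{\xi}_j$ together with $E(\tilde{\xi}_j)=\mu$, to get $E(\sum_{j=N_t+1}^{N_{t+\tau}}\tilde{\xi}_j\,|\,\mathcal{F}_t)=\lambda\tau\mu$. Collecting terms, the drift $(r-\lambda\bar{\mu})\tau$ and $-r\tau$ combine with $\lambda\tau\mu$ to leave
\[
E\!\left(\ln\!\left(\frac{S_{t+\tau}}{S_t e^{r\tau}}\right)\bigg|\,\mathcal{F}_t\right) = -\tfrac{1}{2}\,E\!\left(\int_t^{t+\tau}V_s\,ds\,\bigg|\,\mathcal{F}_t\right) - \lambda\tau(\bar{\mu}-\mu) \, ,
\]
and multiplying by $-\tfrac{2}{\tau}\times 100^2$ produces the claimed form, modulo the identification of the integrated-variance term.

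The third step is to recognise that $E(\int_t^{t+\tau}V_s\,ds\,|\,\mathcal{F}_t)=g(V_t,\tau)$. Writing $L(l)=E(\exp(-l\int_t^{t+\tau}V_s\,ds)\,|\,V_t)$ and differentiating under the expectation at $l=0$ gives $\partial_l L(0)=-E(\int_t^{t+\tau}V_s\,ds\,|\,V_t)$, so by the very definition of $g$ one has $g(V_t,\tau)=E(\int_t^{t+\tau}V_s\,ds\,|\,V_t)$; the Markov property of $V$ allows conditioning on $\mathcal{F}_t$ to be replaced by conditioning on $V_t$. Substituting this into the expression from the previous step yields the stated identity.

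The main obstacle I anticipate is technical rather than conceptual: justifying that the Brownian integrals are true (not merely local) martingales and that differentiation may be interchanged with the expectation defining $g$. Both hinge on finiteness and smoothness of the Laplace transform $l\mapsto E(\exp(-l\int_t^{t+\tau}V_s\,ds)\,|\,V_t)$ near $l=0$, which for the $3/2$ variance process is guaranteed by the explicit closed form already used in Proposition \ref{propjointFLtransform} (via \cite{CarrSu07,Lewis00}). Everything else is bookkeeping.
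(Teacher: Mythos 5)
Your proof is correct, and it supplies precisely the argument the paper leaves implicit: the lemma is stated without proof as an extension of Proposition 1 in \cite{ZhangZhu06}, and the intended derivation is exactly your decomposition of $\ln(S_{t+\tau}/S_t)$ into the drift, the integrated variance, the Brownian martingale terms, and the compound-Poisson sum, followed by the identification $g(V_t,\tau)=E\left(\int_t^{t+\tau}V_s\,ds\,\big|\,V_t\right)$ from the Laplace transform. The integrability you flag as the only technical point is indeed available: $V_t=1/X_t$ with $X$ a square-root process of dimension $\delta=4(\kappa+\epsilon^2)/\epsilon^2>2$, so $E(V_t)<\infty$ and the It\^o integrals are genuine martingales.
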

Lemma \ref{propVIXformula} is useful as it shows that the distribution of $VIX^2_t$ can be obtained via the distribution of $V_t$, for $t \geq 0$.  Consequently, the problem of pricing VIX derivatives is reduced to the problem of finding the transition density function for the variance process.  In the following proposition, we present the Zhang-Zhu formula for the futures price and a formula for call options.

\begin{proposition} \label{propderivVIX} Let $S$, $V$, and $VIX$ be given by Equations \eqref{eqstockprice1}, \eqref{eqvarprocess} and \eqref{eqdefVIX}. We obtain the following Zhang-Zhu formula for futures on the VIX
\begin{displaymath}
e^{- r T} E \left( VIX_T \right) = e^{- r T} \int^{\infty}_0 \sqrt{ \left( \frac{g(y, \tau)}{\tau}  +2 \lambda ( \bar{\mu} - \mu) \right) \times 100^2 } f_{V_T \vert V_0} (y) dy \, , \, T > 0 \, ,
\end{displaymath}
and the following formula for a call option
\begin{eqnarray*}
\lefteqn{ e^{- r T} E \left( \left( VIX_T - K \right)^+ \right)}
 \\ &=& e^{- r T} \int^{\infty}_0 \left( \sqrt{ \left( \frac{g(y, \tau)}{\tau}  +2 \lambda ( \bar{\mu} - \mu) \right) \times 100^2 } - K \right)^+ f_{V_T \vert V_0} (y) dy \, , \, T > 0 \, ,
\end{eqnarray*}
where $f_{V_T \vert V_0} ( y)$ denotes the transition density of $V$ started from $V_0$ at time $0$ being at $y$ at time $T$.
\end{proposition}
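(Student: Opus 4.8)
The plan is to obtain both formulae as direct consequences of Lemma \ref{propVIXformula} combined with the fact that the variance process $V$ admits a known transition density. First I would invoke Lemma \ref{propVIXformula} to write $VIX^2_T = \left( \frac{g(V_T, \tau)}{\tau} + 2 \lambda ( \bar{\mu} - \mu) \right) \times 100^2$, which exhibits $VIX_T$ as a deterministic, measurable function of the single random variable $V_T$. Since the VIX is non-negative by construction, I take the positive square root to obtain $VIX_T = \Phi(V_T)$, where $\Phi(y) := \sqrt{ \left( \frac{g(y, \tau)}{\tau} + 2 \lambda ( \bar{\mu} - \mu) \right) \times 100^2 }$. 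Consequently every payoff written on $VIX_T$, in particular $VIX_T$ itself and $(VIX_T - K)^+$, is a Borel function of $V_T$ alone.

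Next I would reduce the required expectations to integrals against the marginal law of $V_T$. Because $V$ is a time-homogeneous diffusion started deterministically at $V_0$, the conditional distribution of $V_T$ given $V_0$ is absolutely continuous with density $f_{V_T \vert V_0}$. This existence claim follows from the classical observation that the reciprocal $Y = 1/V$ satisfies a square-root (CIR) equation, $dY_t = ( \kappa + \epsilon^2 - \kappa \theta Y_t ) dt - \epsilon \sqrt{Y_t} dW^1_t$, whose transition law is non-central chi-squared; hence $V_T = 1/Y_T$ inherits an explicit density via the change-of-variables formula, the same structure that already underlies Equation \eqref{eqpurediffcall}. Applying the standard reduction of the expectation of a function of $V_T$ to an integral against $f_{V_T \vert V_0}$, and multiplying through by the discount factor $e^{-rT}$ (which plays no role beyond scaling both sides), yields $e^{- r T} E ( VIX_T ) = e^{- r T} \int^{\infty}_0 \Phi(y) f_{V_T \vert V_0}(y) \, dy$ and $e^{- r T} E ( ( VIX_T - K )^+ ) = e^{- r T} \int^{\infty}_0 ( \Phi(y) - K )^+ f_{V_T \vert V_0}(y) \, dy$, which are precisely the asserted expressions.

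I expect the genuine content of this result to reside not in the proposition itself but in the two facts it rests upon, namely Lemma \ref{propVIXformula}, which collapses the forward-looking conditional expectation defining the VIX into a function of the current variance, and the availability of the transition density $f_{V_T \vert V_0}$ described above. Granting these, the proof is essentially a change-of-variables argument, and the one point deserving care is the interchange of expectation and integration: I would verify that $\Phi$ is integrable against $f_{V_T \vert V_0}$, so that the reduction to an integral is legitimate, by checking that $\Phi(y)$ grows slowly enough relative to the tails of the reciprocal-CIR density for convergence to hold. The remaining manipulations are routine.
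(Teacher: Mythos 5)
Your proposal is correct and matches the paper's (implicit) argument exactly: the paper offers no separate proof, instead presenting the proposition as an immediate consequence of Lemma \ref{propVIXformula} (which makes $VIX_T$ a deterministic function of $V_T$) together with the explicit transition density of $V$ from Lemma \ref{lemtransdensVt}, obtained via the reciprocal square-root process. Your added remark on integrability is a reasonable extra precaution the paper does not spell out.
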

An expression for VIX put options can be obtained via the put-call parity relation for VIX options, namely
\begin{displaymath}
e^{- r T} E \left( \left( K - VIX_T \right)^+ \right) = e^{- r T} E \left( \left( VIX_T - K\right)^+ \right) + K e^{- r T} - e^{- r T} E \left( VIX_T \right) \, ,
\end{displaymath}
see Equation (25) in \cite{LianZhu11}.

For the $3/2$ model instead of using Lemma \ref{propVIXformula} we could alternatively use Theorem 4 in \cite{CarrSu07}.  However, our approach is not restricted to the $3/2$ model and applies to all stochastic volatility models for which the Laplace transform of the realized variance is known.
Furthermore, in the case of the $3/2$ model it is well known that $V_t$ is the inverse of a square-root processs \cite{B12, CarrSu07, Drimus11, GoardMaz12}.

\begin{lemma} \label{lemtransdensVt} Let $V$ be defined as in Equation \eqref{eqvarprocess}, then the transition density $f$ of $V$ is given by
\begin{displaymath}
f_{V_T \vert V_t} (y) = \frac{1}{y^2} \frac{e^{\kappa \theta (T-t)}}{c(T-t)} p\left( \delta, \alpha, \frac{e^{\kappa \theta (T-t)}}{y c(T-t)}\right) \, , \, T > t \geq 0,
\end{displaymath}
where $\delta= \frac{4(\kappa + \epsilon^2)}{\epsilon^2}$, $\alpha =\frac{1}{V_t c(T-t)}$, $c(t) = \epsilon^2 ( \exp \left( \kappa \theta t \right) - 1) / ( 4 \kappa \theta )$ and $p(\nu, \beta, \cdot)$ denotes the probability density function of a non-central chi-squared random variable with $\nu$ degrees of freedom, and non-centrality parameter $\beta$.
\end{lemma}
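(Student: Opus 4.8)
The plan is to exploit the stated fact that $V$ is the reciprocal of a square-root (CIR) diffusion, reduce the problem to the classical CIR transition density, and then push that density through the change of variables $y \mapsto 1/y$.

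First I would set $Y_t := 1/V_t$ and apply It\^o's formula to $g(v) = v^{-1}$. Using $dV_t = \kappa V_t(\theta - V_t)\,dt + \epsilon V_t^{3/2}\,dW^1_t$ from Equation \eqref{eqvarprocess} together with $d\langle V\rangle_t = \epsilon^2 V_t^3\,dt$, $g'(v) = -v^{-2}$, and $g''(v) = 2v^{-3}$, a short computation gives
\begin{equation*}
dY_t = \left( \kappa + \epsilon^2 - \kappa\theta\, Y_t \right) dt - \epsilon \sqrt{Y_t}\, dW^1_t .
\end{equation*}
Thus $Y$ is a CIR process with mean-reversion speed $\kappa\theta$, long-run level $(\kappa+\epsilon^2)/(\kappa\theta)$, and volatility parameter $\epsilon$; the minus sign in front of the diffusion term is immaterial for the law, since $-W^1$ is again a Brownian motion. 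Note that the associated Feller index is $\delta = 4(\kappa+\epsilon^2)/\epsilon^2 > 2$, so $Y$ stays strictly positive and $V = 1/Y$ is well defined and finite for all $t$.

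Next I would invoke the classical transition law of the CIR process: for $dY_t = a(b - Y_t)\,dt + \sigma\sqrt{Y_t}\,dW^1_t$, the scaled variable $2\hat c\, Y_T$ is noncentral chi-squared with $4ab/\sigma^2$ degrees of freedom and noncentrality $2\hat c\, Y_t e^{-a(T-t)}$, where $\hat c = 2a/(\sigma^2(1 - e^{-a(T-t)}))$. Substituting $a = \kappa\theta$, $b = (\kappa+\epsilon^2)/(\kappa\theta)$, and $\sigma = \epsilon$, the degrees of freedom collapse to the stated $\delta = 4(\kappa+\epsilon^2)/\epsilon^2$. Writing $c(t) = \epsilon^2(e^{\kappa\theta t}-1)/(4\kappa\theta)$ as in the statement, I would verify the identities $2\hat c = e^{\kappa\theta(T-t)}/c(T-t)$ and $2\hat c\, e^{-\kappa\theta(T-t)} = 1/c(T-t)$, so that the noncentrality parameter becomes $Y_t/c(T-t) = 1/(V_t\, c(T-t)) = \alpha$. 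From the chi-squared law the density of $Y_T$ is $f_{Y_T|Y_t}(z) = 2\hat c\, p(\delta, \alpha, 2\hat c\, z)$.

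Finally I would transfer this to $V_T = 1/Y_T$ by the univariate change-of-variables formula: with $z = 1/y$ and Jacobian $|dz/dy| = 1/y^2$,
\begin{equation*}
f_{V_T|V_t}(y) = \frac{1}{y^2}\, f_{Y_T|Y_t}\!\left(\tfrac{1}{y}\right) = \frac{1}{y^2}\, 2\hat c\; p\!\left(\delta, \alpha, \frac{2\hat c}{y}\right),
\end{equation*}
and substituting $2\hat c = e^{\kappa\theta(T-t)}/c(T-t)$ reproduces exactly the claimed formula. I do not anticipate a genuine obstacle here; the one place demanding care is the bookkeeping that matches the standard CIR normalization $\hat c$ to the paper's constant $c(\cdot)$---in particular tracking the $e^{\kappa\theta(T-t)}$ factors when converting between $1 - e^{-\kappa\theta(T-t)}$ and $e^{\kappa\theta(T-t)} - 1$---and remembering that the $1/y^2$ prefactor is precisely the Jacobian of the reciprocal map rather than part of the chi-squared density.
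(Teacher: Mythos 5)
Your proof is correct and follows essentially the same route as the paper: introduce $X_t=1/V_t$, observe via It\^o's formula that it is a square-root (CIR) process with the stated parameters, invoke the noncentral chi-squared transition law (the paper cites Jeanblanc, Yor and Chesney for the fact that $X_T e^{\kappa\theta(T-t)}/c(T-t)\sim\chi^2(\delta,\alpha)$), and change variables back. Your write-up merely makes explicit the normalization bookkeeping and the Jacobian step that the paper leaves implicit.
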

\begin{proof} As indicated above we introduce the process $X$ via $X_t = \frac{1}{V_t}$, whose dynamics are given by
\begin{displaymath}
d X_t = ( \kappa + \epsilon^2 - \kappa \theta X_t ) dt - \epsilon \sqrt{X_t} d W^1_t \, .
\end{displaymath}
Given $X(t)$, we note from \cite{JeanblancYoCh09} that
\begin{displaymath}
X_T \frac{ e^{\kappa \theta (T-t)}}{c(T-t)} \sim \chi^2 ( \delta, \alpha) \, , \, T > t \geq 0 \, ,
\end{displaymath}
where $\chi^2( \nu, \beta)$ denotes a non-central chi-squared random variable with $\nu$ degrees of freedom and non-centrality parameter $\beta$.
\end{proof}

Since we have an expression for the transition density of $V$, Proposition \ref{propderivVIX} can be used to price derivatives on the VIX as a discounted expectation.

To further demonstrate that the methodology presented in this section is not restricted to the $3/2$ plus jumps model, consider the stochastic volatility plus jumps model \cite{Bates96, DuffiePanSin00} given by
\begin{align}
d \tilde{S}_t &= \tilde{S}_{t-} \left( ( r - \lambda \bar{\mu} ) dt + \sqrt{\tilde{V}_t} \left( \rho dW^1_t + \sqrt{1 - \rho^2}d W^2_t \right)+ (e^{\xi}-1) d N_t  \right) \label{eqSVJstock}
\\ d \tilde{V}_t &= \tilde{\kappa} ( \tilde{\theta} - \tilde{V}_t) + \tilde{\epsilon} \sqrt{\tilde{V}_t} d W^1_t \, , \label{eqSVJvar}
\end{align}
where $r, \rho, \lambda, \bar{\mu}, \mu,$ and $\sigma$ are as defined for the $3/2$ model, and $\tilde{\kappa}, \tilde{\theta}$, and $\tilde{\epsilon} > 0$.
Using Lemma \ref{propVIXformula} we have
\begin{displaymath}
VIX^2_t =  \frac{g ( \tilde{V}_t, \tau )}{\tau} +2 \lambda ( \bar{\mu} - \mu) \, ,  
\end{displaymath}
where
\begin{displaymath}
g (x,\tau) = a x + b\, ,  \, a = \frac{1 - e^{- \tilde{\kappa} \tau}}{\tilde{\kappa}} \, , \, b = \tilde{\theta} ( \tau - a ) \, .
\end{displaymath}
%
 As mentioned previously, it is well known that the transition density of a square-root process is non-central chi-squared. Therefore, Proposition \ref{propderivVIX} can be used to price options on the VIX in the setting \eqref{eqSVJstock} - \eqref{eqSVJvar}. This result is a small extension of Proposition 3 in \cite{LianZhu11}, as our result allows for jumps in the index.

\section{Conclusion} \label{secconc}

We derive general formulae for the pricing of equity and VIX derivatives.  The newly-found formulae allow for an empirical analysis to be performed to assess the appropriateness of the 3/2 framework for the consistent pricing of equity and VIX derivatives.  Empirically the pure-diffusion 3/2 model performs well; it is able to reproduce upward-sloping implied volatilities in VIX options, while a competing model of the same complexity and analytical tractability cannot.  Furthermore, the 3/2 plus jumps model is able to produce a better short-term fit to the implied volatility of index options than its pure-diffusion counterpart, without a loss in tractability.  These observations make the 3/2 plus jumps model a suitable candidate for the consistent modelling of equity and VIX derivatives.



\begin{thebibliography}{10}
\bibitem{AndersenPiterbarg07} Andersen, L. B. G., and Piterbarg, V., Moment Explosions in Stochastic Volatility Models, Finance and Stochastics, 11, 29--50, 2007.
\bibitem{BJY06} Bakshi, G. and Ju, N. and Yang, H., Estimation of continuous time models with an application to equity volatility, Journal of Financial Economics, 82, 227--249, 2006.
\bibitem{B12} Baldeaux, J., Exact Simulation of the $3/2$ Model, International Journal of Theoretical and Applied Finance, to appear.
\bibitem{Bates96} Bates, D., Jumps and Stochastic Volatility: Exchange Rate processes Implicit in Deutsche Mark Options, Review of Financial Studies, 9, 69--107, 1996.
\bibitem{BayraktarKarXin11} Bayraktar, E., Kardaras, C., and Xing, H., Valuation Equations for Stochastic Volatility Models, SIAM Journal of Financial Mathematics, to appear.
\bibitem{Bergomi05} Bergomi, L., Smile dynamics III, Risk, 18, 67--73, 2005.
\bibitem{Buehler06} Buehler, H., Consistent variance curve models, Finance and Stochastics, 10, 178--203, 2006.
\bibitem{CGMY05} Carr, P., Geman, H., Madan, D., and Yor, M., Pricing options on realized variance, Finance and Stochastics, 9, 453--475, 2005.
\bibitem{CarrMad99} Carr, P., and Madan, D., Option pricing using the fast Fourier transform, Journal of Computational Finance, 4, 61--73, 1999.
\bibitem{CarrSu07} Carr, P., and Sun, J., A new approach for option pricing under stochastic volatility, Review of Derivatives Research, 10, 87--150, 2007.
\bibitem{ContKokholm} Cont, R., and Kokholm, T., A Consistent Pricing Model for Index Options and Volatility Derivatives, Mathematical Finance, to appear.
\bibitem{ChanPla12} Chan, L.-L., and Platen, E., Exact Pricing and Hedging Formulas of Long Dated Variance Swaps under a $3/2$ Volatility Model, working paper.
\bibitem{Drimus11} Drimus, G. G., Options on realized variance by transform methods: a non-affine stochastic volatility model, Quantitative Finance, to appear.
\bibitem{DuffiePanSin00} Duffie, D., Pan, J., and Singleton, K., Transform analysis and asset pricing for affine jump-diffusion, Econometrica, 68, 1343-�1376, 2000.
\bibitem{FangOst08} Fang, F., and Osterlee, K., A novel pricing method for European options based on Fourier-cosine series expansions, SIAM Journal of Scientific Computing, 31, 826--848, 2008.
\bibitem{Gatheral06} Gatheral, J., The Volatility Surface, Wiley Finance, 2006.
\bibitem{GoardMaz12} Goard, J., and Mazur, M., Stochastic Volatility Models and the pricing of VIX Options, Mathematical Finance, to appear.
\bibitem{Heston93} Heston, S. L., A closed-form solution for options with stochastic volatility with applications to bond and currency options, Review of Financial Studies, 6, 327--343, 1993.
\bibitem{Heston97} Heston, S. L., A simple new formula for options with stochastic volatility, Washington University of St. Louis (working paper).
\bibitem{ItkinCa10} Itkin, A., and Carr, P., Pricing swaps and options on quadratic variation under stochastic time change models - discrete observations case, Review of Derivatives Research, 13, 141--176, 2010.
\bibitem{JeanblancYoCh09} Jeanblanc, M., Yor, M., and Chesney, M., Mathematical Methods for Financial Markets, Springer Finance, Springer, 2009.
\bibitem{KaratzasShreve00} Karatzas, I., and Shreve S., Brownian Motion and Stochastic Calculus, Springer-Verlag, New York, 1991.
\bibitem{KellerRessel11} Keller-Ressel, M., Moment Explosions and Long-Term Behavior of Affine Stochastic Volatility Models, Mathematical Finance, 21, 73--98, 2011.
\bibitem{Lennox12} Lennox, K., Lie Symmetry Methods for Multidimensional Linear Parabolic PDEs and Diffusions, PhD Thesis, University of Technology, Sydney, 2012.
\bibitem{Lewis00} Lewis, A., L., Option Valuation Under Stochastic Volatility, Finance Press, Newport Beach, 2000.
\bibitem{LianZhu11} Lian, G.-H., and Zhu, S.-P., Pricing VIX options with stochastic volatility and random jumps, Decisions in Economics and Finance, to appear.
\bibitem{MedvedevSca07} Medvedev, A., and Scaillet, O., Approximation and calibration of short-term implied volatilities under jump-diffusion stochastic volatility, Review of Financial Studies, 20, 427--459, 2007.
\bibitem{Meyer12} Meyer-Dautrich, S., and Vierthauer, R., Pricing Target Volatility Fund Derivatives, Quantitative Methods in Finance Conference Presentation, 2011.
\bibitem{MijatovicUrusov10} Mijatovi\'c, A., and Urusov, M., On the martingale property of certain local martingales, Probability Theory and Related Fields, to appear.
\bibitem{Sepp08} Sepp, A., Pricing Options on Realized Variance in the Heston Model with Jumps in Returns and Volatility, Journal of Computational Finance, 11, 33-70, 2008.
\bibitem{Sepp08b} Sepp, A., VIX Option Pricing in a Jump-Diffusion Model, Risk, 84--89, April 2008.
\bibitem{Sepp11} Sepp, A., Parametric and Non-Parametric Local Volatility Models: Achieving Consistent Modeling of VIX and Equity Derivatives, Quant Congress Europe, London Conference Presentation, November 2011.
\bibitem{Sin98} Sin, C. A., Complications with Stochastic Volatility Models, Advances in Applied Probability, 30, 256--268, 1998.
\bibitem{ZhangZhu06} Zhang, J. E., and Zhu, Y., VIX futures, Journal of Futures Markets, 26, 521--531, 2006.
\bibitem{ZhuLian11} Zhu, S.-P., and Lian, G.-H., An analytical formula for VIX futures and its applications, Journal of Futures Markets, to appear.
\end{thebibliography}

\end{document}